\newtheorem{theorem}{Theorem}
\newtheorem*{main-result}{Main result}
\newcommand{\C}{\mathbb{C}}
\newcommand{\R}{\mathbb{R}}
\newcommand{\om}{\omega}
\newcommand{\pur}{\mathcal{P}_1(\mathcal{H})}
\newcommand{\ler}[1]{\left( #1 \right)}
\newcommand{\lesq}[1]{\left[ #1 \right]}
\newcommand{\lers}[1]{\left\{ #1 \right\}}
\newcommand{\hohc}{\cH \otimes \cH^*}
\newcommand{\abs}[1]{\left| #1 \right|}
\newcommand{\inner}[2]{\left< #1,#2 \right>}
\newcommand{\be}{\begin{equation}}
\newcommand{\ee}{\end{equation}}
\newcommand{\ba}{\begin{array}}
\newcommand{\ea}{\end{array}}
\newcommand{\fel}{\frac{1}{2}}
\newcommand{\cA}{\mathcal{A}}
\newcommand{\cH}{\mathcal{H}}
\newcommand{\cP}{\mathcal{P}}
\newcommand{\cS}{\mathcal{S}}
\newcommand{\cC}{\mathcal{C}}
\newcommand{\cU}{\mathcal{U}}
\newcommand{\cL}{\mathcal{L}}
\newcommand{\bb}{\mathbf{b}}
\newcommand{\bO}{\mathbf{O}}
\newcommand{\tr}{\mathrm{tr}}
\begin{document}
\title[Qubit Wasserstein isometries]{Isometries of the qubit state space with respect to quantum Wasserstein distances}

\author[Rich\'ard Simon]{Rich\'ard Simon}
\address{Rich\'ard Simon, Department of Analysis, Institute of Mathematics, Budapest University of Technology and Economics, Műegyetem rkp. 3., H-1111 Budapest, Hungary.}
\email{sr570063@gmail.com; simonr@math.bme.hu}

\author[D\'aniel Virosztek]{D\'aniel Virosztek}
\address{D\'aniel Virosztek, Alfr\'ed R\'enyi Institute of Mathematics\\ Re\'altanoda u. 13-15.\\Budapest H-1053\\ Hungary}
\email{virosztek.daniel@renyi.hu}

\keywords{quantum optimal transport, isometries, preservers}

\thanks{R. Simon is supported by the New National Excellence Program (grant no. \'UNKP-23-3). D. Virosztek is supported by the Momentum Program of the Hungarian Academy of Sciences under grant agreement no. LP2021-15/2021, and partially supported by the ERC Synergy Grant no. 810115.}

\subjclass[2020]{Primary: 49Q22; 81Q99. Secondary: 54E40.}

\maketitle
\begin{abstract}
In this paper we study isometries of quantum Wasserstein distances and divergences on the quantum bit state space. We describe isometries with respect to the symmetric quantum Wasserstein divergence $d_{\text{sym}}$, the divergence induced by all of the Pauli matrices. We also give a complete characterization of isometries with respect to $D_z$, the quantum Wasserstein distance corresponding to the single Pauli matrix $\sigma_z$.
\end{abstract}
\section{Introduction}
\subsection{Motivation and main results}

Even though the classical problem of transporting mass in an optimal way was already formulated in the 18th century by Monge, the theory of classical optimal transport (OT) became an intensively researched topic of analysis only in the last couple of decades with strong relations to mathematical physics \cite{DL-DP-M-T,JKO-98,Lott}, the theory of partial differential equations \cite{Ambrosio-lecturenotes,Figalli-note,OPS}, and probability \cite{BP,BiegelbockJuillet,HMS}. One could also mention the several applications in artificial intelligence, image processing, and various other fields of applied sciences. See e.g. \cite{Peyre1,Peyre2,Santambrogio} and the references therein.
\par
The quantum counterpart of the classical theory has just emerged in the recent years. As usual, the correspondence between the classical world and the quantum world is not one-to-one. The non-commutative counterpart of optimal transport theory is a flourishing research field these days with various different approaches such as that of Biane and Voiculescu \cite{BianeVoiculescu}, Carlen, Maas, Datta, and Rouzé  \cite{CarlenMaas-1,CarlenMaas-2,CarlenMaas-3,DattaRouze1,DattaRouze2}, Caglioti, Golse, Mouhot, and Paul \cite{CagliotiGolsePaul, CagliotiGolsePaul-towardsqot, GolseMouhotPaul, GolsePaul-wavepackets, GolsePaul-Schrodinger,GolsePaul-meanfieldlimit}, De Palma and Trevisan \cite{DePalmaMarvianTrevisanLloyd,DPT-AHP, DPT-lecture-notes}, \.Zyczkowski and his collaborators  \cite{FriedmanEcksteinColeZyczkowski-MK,ZyczkowskiSlomczynski1,ZyczkowskiSlomczynski2,BistronEcksteinZyczkowski}, and Duvenhage \cite{Duvenhage1, Duvenhage2}. Separable quantum Wasserstein distances have also been introduced recently \cite{TothPitrik}.
\par
In this paper, we follow the approach  of De Palma and Trevisan involving quantum channels, which is closely related to the quantum optimal transport concept of Caglioti, Golse, Mouhot, and Paul based on quantum couplings.
A common feature of both the channel-based De Palma-Trevisan approach and the coupling-based Caglioti-Golse-Mouhot-Paul approach is that the induced optimal transport distances are not genuine metrics. In particular, states can have strictly positive self-distances. On the one hand, this is coherent with our picture of quantum mechanics, on the other hand it can lead to some unexpected consequences. 
For example, there exist non-surjective and even non-injective quantum Wasserstein isometries (distance preserving maps) --- see \cite{GPTV23}. None of these could occur in a genuine metric setting. As a response to this phenomenon, De Palma and Trevisan introduced quantum Wasserstein divergences \cite{DPT-lecture-notes}, which are appropriately modified versions of quantum Wasserstein distances, to eliminate self-distances --- see \eqref{Wasserstein-divergence} for a precise definition. They conjectured that the divergences defined this way are genuine metrics on quantum state spaces \cite{DPT-lecture-notes}, and this conjecture has been recently justified under a certain additional assumption \cite{BPTV-PRA}.
\par
 The theory of isometries on structures of operators and functions is a well-established, active field of research in analysis and linear algebra. See, for example, \cite{ML18, IK19, MV21, Peral23}  for some of the recent advances in the field.
 In \cite{GPTV23}, the second author and his collaborators characterized the isometry group of the qubit state space with respect to the Wasserstein distance $D_{sym}$, the quantum Wasserstein distance induced by all three Pauli matrices. They obtained a Wigner-type result, showing that the isometry group consists of unitary and anti-unitary conjugations.
 They also gave lower and upper bounds on the isometry group of $D_{xz}$, the quantum Wasserstein distance induced by the Pauli operators $\sigma_x$ and $\sigma_z$.
 \par
 The complete description of the isometry group of $D_{sym}$ obtained in \cite{GPTV23} encouraged us to study the isometries of the corresponding divergence $d_{sym},$ which is more challenging for at least two reasons. First, the fact that the isometries of quantum Wasserstein distances preserve the self-distances of states provides us with a useful grab on isometries of distances --- we completely lose this tool when turning to divergences as $d(\rho,\rho)=0$ for any divergence $d$ and state $\rho.$ Second, the squared Wasserstien distances are convex, and hence the maximal distances are realized by extremal, that is, pure states. This fact is also useful when determining the structure of isometries, and we cannot use it in the case of Wasserstein divergences, as squared divergences are not convex in general.
\par
The surprising phenomenon that the structure of the $D_{xz}$-isometries turned out to be much richer and much more flexible than that of the $D_{sym}$-isometries motivated us to go one step further and remove also $\sigma_x$ from the generators of the quantum Wasserstein distance and consider $D_z$ --- we recall that we get $D_{xz}$ from $D_{sym}$ by removing $\sigma_y.$ We expected that the structure of $D_z$-isometries on qubits shows even more flexibility, and that turned out to be the case --- see Theorem \ref{thm:Dz-main} for the details.
\par
According to the above considerations, the goal of this paper is to describe isometries of the quantum Wasserstein divergence $d_{\text{sym}},$ and to characterize the isometries with respect to $D_z$, the Wasserstein distance induced by the single Pauli operator $\sigma_z$.
We describe our main results in an informal way below --- see Theorem \ref{thm:d-sym-main} and \ref{thm:Dz-main} for the precise statements.

\begin{main-result}
The isometries of the symmetric quantum Wasserstein divergence $d_{sym},$ which map pure states to pure states, are exactly the Wigner symmetries, that is, the unitary and anti-unitary conjugations. In the Bloch ball model, the group of these transformations coincides with the orthogonal group $\bO(3).$ Furthermore, a self-map of the quantum bit state space is an isometry of the quantum Wasserstein distance $D_z$ if and only if it preserves the length of the Bloch vector of every state and either keeps the third (``z") coordinate of the Bloch vectors fixed or sends them to their negative.
\end{main-result}

\subsection{Basic notions and notations} As our results concern quantum bits, we will review the basics of quantum optimal transport in the finite-dimensional setting to avoid technical difficulties that will not matter later. Let $\cH$ be a finite-dimensional complex Hilbert space, and let $\mathcal{L}(\cH)$ stand for the set of linear operators acting on $\cH$. Let $\mathcal{L}(\cH)^{\text{sa}}$ denote the real linear vector space of self-adjoint linear operators on $\cH$.
We write $A \leq B$ for $A,B \in \mathcal{L}(\cH)^{\text{sa}}$ if $B-A$ is positive semidefinite, that is, if $\langle x, (B-A)x \rangle \geq 0$ for all $x \in \cH$. We will denote by $S(\cH)$ the set of quantum states, that is $S(\cH)=\{ \rho \in \mathcal{L}(\cH) \, |\, \rho \geq 0, \,\tr_{\cH}[\rho]=1\}$. The extremal points of $S(\cH)$ are called pure states and are rank-one projections, that is, if $\rho \in S(\cH)$ is pure, then there exists a vector $\Psi \in \cH$ such that $\rho=|\Psi \rangle \langle \Psi|$. We denote the set of pure states by $\mathcal{P}_1(\cH)$. The transpose of a linear operator $A \in \cL(\cH)$ is a linear operator acting on $\cH^*,$ the (topological) dual of $\cH.$ It is denoted by $A^T$ and defined by the identity
\be \label{eq:transpose-def}
(A^T \varphi)(x) =\varphi (A x) \qquad \ler{x \in \cH, \, \varphi \in \cH^*}.
\ee
Let $\mathcal{A}=\{A_1, A_2, \dots A_N\}$ be a set of  self-adjoint linear operators on $\cH$.
Then the \textit{cost operator} $C_{\cA}$ induced by $\mathcal{A}$ is defined as
\be
C_{\cA}=\sum_{j=1}^N (A_j \otimes I^T-I \otimes A_j^T)^2.
\ee
The \textit{Wasserstein distance} corresponding to $\cA$ is denoted by $D_{\cA}(.;.)$ and defined by
\be\label{Wasserstein-definition}
D_{\cA}^2(\rho, \om)=\inf \left\{\tr_{\cH \otimes \cH^* }\left[\Pi C_{\cA}\right] \, | \,  \Pi \in \cC(\rho, \om)\right\},
\ee
where $\cC(\rho, \om)$ stands for the set of couplings of $\rho$ and $\om$. 
According to the convention introduced by De Palma and Trevisan in \cite{DPT-AHP} the set of quantum couplings of $\rho, \om \in \cS(\cH)$ is given by 
$$
\cC(\rho, \om)=\{\Pi \in \cS(\cH \otimes \cH^*) \, |\, \tr_{\cH^*}[\Pi]=\om,\,\, \tr_{\cH} [\Pi] =\rho^T\}.
$$
The \textit{Wasserstein divergence} generated by the observables belonging to $\cA$ is denoted by $d_{\cA}(.;.)$ and defined as follows:
\be\label{Wasserstein-divergence}
d_{\cA}(\rho, \om):=\ler{D_{\cA}^2(\rho, \om)-\frac{1}{2}\ler{D_{\cA}^2(\rho, \rho)+D_{\cA}^2(\om, \om)}}^{\fel}.
\ee
Let us take a closer look at the two-dimensional complex Hilbert space $\cH=\C^2.$ Its state space has a particularly convenient description: elements of $\cS(\C^2)$ can be represented with vectors from $\mathbb{R}^3$ using the Bloch representation. The Bloch vector of a state $\rho$ is defined as
$$ \mathbb{R}^3 \ni \mathbf{b}_{\rho}=\left(\tr_{\cH}[\sigma_j \rho] \right)_{j=1}^3,$$
where $\sigma_j$ refers to one of the Pauli matrices
\be \label{eq:pauli-def}
\sigma_1=\begin{bmatrix}
0 & 1 \\ 1 & 0
\end{bmatrix},\,
\sigma_2=\begin{bmatrix}
0 & -i \\ i & 0
\end{bmatrix},\,
\sigma_3=\begin{bmatrix}
1 & 0 \\ 0 & -1
\end{bmatrix}.
\ee

\section{Isometries of the $d_{sym}$ quantum Wasserstein divergence}

First, we consider the case when the transport cost is symmetric in the sense that it is generated by all three Pauli matrices. That is, $\cA=\lers{\sigma_1, \sigma_2, \sigma_3}.$ We denote the corresponding cost operator by $C_{sym},$ and it is given by
\be
C_{\text{sym}}=\sum_{j=1}^3 (\sigma_j \otimes I^T-I \otimes \sigma_j^T)^2=\begin{bmatrix}
4 & 0 & 0 & -4 \\
0 & 8 & 0 & 0 \\
0 & 0 & 8 & 0 \\
-4 & 0 & 0 & 4
\end{bmatrix}.
\ee
We denote the corresponding Wasserstein distance with $D_{\text{sym}}(.,.)$, and the corresponding Wasserstein divergence with $d_{\text{sym}}(.,.)$. The main result of this section reads as follows.

\begin{theorem} \label{thm:d-sym-main}
Let $\Phi: S(\C^2) \to S(\C^2)$ be an isometry of the quantum Wasserstein divergence corresponding to the symmetric cost operator $C_{\text{sym}}$, that is,
$$
d_{\text{sym}}(\Phi(\rho),\Phi(\omega))=d_{\text{sym}}(\rho,\omega) \qquad \ler{\rho, \omega \in S(\C^2)}
$$
such that $\Phi$ maps pure states to pure states, that is $\Phi(\cP_1(\C^2)) \subset \cP_1(\C^2)$.
Then $\Phi$ is necessarily of the form of $\Phi(\rho)=U\rho U^*$, where $U$ is either a unitary or an anti-unitary operator on $\C^2$.
Conversely, unitary and anti-unitary conjugations are $d_{sym}$-isometries.
\end{theorem}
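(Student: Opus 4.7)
My plan is first to analyze $\Phi$ on pure states, where the divergence collapses to something very tractable, and then to use the remaining pure states as linear probes to pin down $\Phi$ on the whole Bloch ball. The starting observation is that whenever at least one of $\rho,\omega$ is pure, the coupling set $\cC(\rho,\omega)$ is a singleton --- a bipartite state with a pure marginal must factorize --- so the unique element is $\omega\otimes\rho^T$. Plugging this into \eqref{Wasserstein-definition}, using $\sigma_j^2=I$ and $\tr(\rho\sigma_j)=(\bb_\rho)_j$, I expect the closed formula
\[D_{\text{sym}}^2(\rho,\omega)=6-2\bb_\rho\cdot\bb_\omega\qquad\text{whenever either state is pure,}\]
which in particular gives $D_{\text{sym}}^2(\rho,\rho)=4$ for every $\rho\in\cP_1(\C^2)$. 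Substituting into \eqref{Wasserstein-divergence} yields the clean identity
\[d_{\text{sym}}^2(\rho,\omega)=\abs{\bb_\rho-\bb_\omega}^2\qquad\ler{\rho,\omega\in\cP_1(\C^2)},\]
so on pure states $d_{\text{sym}}$ is literally the Euclidean metric on the Bloch sphere $S^2\subset\R^3$.

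Since $\Phi$ preserves $d_{\text{sym}}$ and maps pure states to pure states, it induces a Euclidean distance preserving self-map of $S^2$. A standard rigidity argument (three linearly independent unit vectors determine a linear isometry of $\R^3$, and every other sphere point is forced by its inner products with those three) shows that this self-map is the restriction of some $O\in\bO(3)$. Through the $\sut\to\mathbf{SO}(3)$ double cover --- extended by complex conjugation for the orientation reversing coset --- $O$ is realized on states by a unitary or anti-unitary conjugation $V(\cdot)V^*$. Composing $\Phi$ with the inverse of this conjugation, which is itself a $d_{\text{sym}}$-isometry, I may assume from now on that $\Phi$ fixes every pure state. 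For such a reduced $\Phi$, an arbitrary $\rho\in\cS(\C^2)$, and a pure $\omega$, the formulae above specialize to
\[d_{\text{sym}}^2(\rho,\omega)=2-2\bb_\rho\cdot\bb_\omega-\fel\ler{D_{\text{sym}}^2(\rho,\rho)-4},\]
so that equating with $d_{\text{sym}}^2(\Phi(\rho),\omega)$ and using $\Phi(\omega)=\omega$ produces
\[2\ler{\bb_{\Phi(\rho)}-\bb_\rho}\cdot\bb_\omega=\fel\ler{D_{\text{sym}}^2(\rho,\rho)-D_{\text{sym}}^2(\Phi(\rho),\Phi(\rho))}\]
for every $\bb_\omega\in S^2$. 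The right-hand side is constant in $\omega$ while the left-hand side is a linear functional of $\bb_\omega$; since $S^2$ spans $\R^3$, the linear part must vanish, forcing $\bb_{\Phi(\rho)}=\bb_\rho$ and hence $\Phi(\rho)=\rho$. Undoing the preliminary conjugation recovers the stated form of the original $\Phi$.

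The converse direction is routine: conjugation by a unitary or anti-unitary $V$ induces a cost-preserving bijection between $\cC(\rho,\omega)$ and $\cC(V\rho V^*,V\omega V^*)$, hence preserves both $D_{\text{sym}}$ and $d_{\text{sym}}$. The main obstacle I anticipate is the pure-state reduction described in the first paragraph: both the sphere rigidity argument and the linear test-functional trick depend critically on the fact that $D_{\text{sym}}^2$ admits the explicit affine expression $6-2\bb_\rho\cdot\bb_\omega$ as soon as one argument is pure, so that all the information the divergence carries about a mixed $\rho$ can be extracted by testing against the two-sphere of pure states.
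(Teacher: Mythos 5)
Your proposal is correct, and for its first two stages it coincides with the paper's argument: the coupling set is a singleton whenever a marginal is pure, which yields $D_{\text{sym}}^2(\rho,\omega)=6-2\inner{\bb_\rho}{\bb_\omega}$ and hence $d_{\text{sym}}^2(\rho,\omega)=\abs{\bb_\rho-\bb_\omega}^2$ on pure states, the $\bO(3)$ rigidity on the sphere, and the reduction to a $\Phi$ fixing every pure state. Where you genuinely diverge is the endgame. The paper probes a general $\omega$ only with the three states $\fel(I+\sigma_j)$, deduces $\bb_\omega-\bb_{\Phi(\omega)}=\alpha(1,1,1)$ for a scalar $\alpha$ expressed through the explicit self-distance formula $D_{\text{sym}}^2(\rho,\rho)=2\bigl(1-\sqrt{1-\abs{\bb_\rho}^2}\bigr)$ imported from \cite{BPTV-PRA}, and then kills $\alpha$ by redoing the computation in a rotated Pauli basis and comparing the two resulting vector equations. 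You instead probe with the entire two-sphere of pure states, so that $2\bigl(\bb_{\Phi(\rho)}-\bb_\rho\bigr)\cdot\bb_\omega$ equals an $\omega$-independent constant for every unit vector $\bb_\omega$; replacing $\bb_\omega$ by $-\bb_\omega$ forces both sides to vanish and gives $\bb_{\Phi(\rho)}=\bb_\rho$ at once. Your version is the more economical one: it needs neither the closed form of the self-distance (only that it is some fixed number attached to $\rho$) nor the auxiliary rotated basis, while the paper's route has the side benefit of recording the unitary invariance of $C_{\text{sym}}$ and the explicit self-distance along the way. Both arguments are sound, and your treatment of the converse direction (a cost-preserving bijection of coupling sets under conjugation) matches the paper's appeal to Lemma 1 of \cite{GPTV23}.
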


\begin{proof}
The fact that maps of the form of $\Phi(\rho)=U\rho U^*$ are indeed isometries follows from the fact that $D_{\text{sym}}$ is invariant under unitary and anti-unitary conjugations (see Lemma 1 in \cite{GPTV23}) and from the definition of $d_{\text{sym}}(\rho, \omega).$ Indeed, $D_{\text{sym}}(U\rho U^*, U \omega U^*)=D_{\text{sym}}(\rho, \omega)$ and 
$$
d_{\text{sym}}(\rho, \omega)^2=D_{\text{sym}}(\rho, \omega)^2-\frac{1}{2}\left(D_{\text{sym}}(\rho, \rho)^2+D_{\text{sym}}(\omega, \omega)^2\right),
$$
from where we get that $d_{\text{sym}}(U\rho U^*, U\omega U^*)=d_{\text{sym}}(\rho, \omega)$.
For the converse statement, let $\rho, \om \in \pur$ be  states of the form of
\be
\rho=\fel(I+x \sigma_1+y \sigma_2+z \sigma_3)=\fel \begin{bmatrix} 1+z & x-yi \\ x+yi & 1-z
\end{bmatrix}
\ee
and 
\be
\om=\fel(I+u \sigma_1+v \sigma_2+w \sigma_3)=\fel \begin{bmatrix} 1+w & u-vi \\ u+vi & 1-w
\end{bmatrix},
\ee
where $|(x, y, z)|=1$ and $|(u, v, w)|=1$, where $|.|$ denotes the Euclidean norm on $\mathbb{R}^3$.
Then both $\rho$ and $\omega$ are pure, and the only coupling between $\rho$ and $\om$ is the tensor product (or in other words, independent) coupling $\om \otimes \rho^T$ and its cost is
\be\label{indep}
\begin{gathered}
\tr_{\cH \otimes \cH^*}\left[ (\om \otimes \rho^T)C_{\text{sym}} \right]=\tr_{\cH \otimes \cH^*}\left[(\om \otimes \rho^T)\left(6 I \otimes I^T-2 \sum_{j=1}^3 \sigma_j \otimes \sigma_j^T \right) \right] \\
=6-2 \sum_{j=1}^3 \tr_{\cH}[\om \sigma_j] \cdot \tr_{\cH^*}[\rho^T \sigma_j^T]=6-2(xu+yv+zw)=6-2 \inner{\mathbf{b}_{\rho}}{\mathbf{b}_{\om}},
\end{gathered}
\ee
where $\mathbf{b}_{\rho}=(x,y,z)$ and $\mathbf{b}_{\om}=(u,v,w)$ are the Bloch vectors of $\rho$ and $\om$, and $\inner{.}{.}$ denotes the Euclidean inner product on $\mathbb{R}^3$. In the above formula \eqref{indep}, we replaced $\C^2$ by $\cH$ for notational convenience, and this will happen later as well when indicating the Hilbert space where the trace is taken.
In particular, we can compute the distance of a pure state $\rho$ from itself:
\be\label{sindep}
\begin{gathered}
\tr_{\cH \otimes \cH^*}\left[ (\rho \otimes \rho^T)C_{\text{sym}} \right]=6-2(x^2+y^2+z^2)=6-2 \inner{\mathbf{b}_{\rho}}{\mathbf{b}_{\rho}}=6-2|\mathbf{b}_{\rho}|^2=4.
\end{gathered}
\ee

Putting \eqref{indep} and \eqref{sindep} together, we get that
\be\label{div}
\begin{gathered}
d_{\text{sym}}^2(\rho, \om)=D_{\text{sym}}^2(\rho, \om)-\frac{1}{2}(D_{\text{sym}}^2(\rho, \rho)+D_{\text{sym}}^2(\om, \om)) \\
=6-2 \inner{\mathbf{b}_{\rho}}{\mathbf{b}_{\om}}-3+|\mathbf{b}_{\rho}|^2-3+|\mathbf{b}_{\om}|^2 \\
=|\mathbf{b}_{\rho}-\mathbf{b}_{\om}|^2 \leq 4.
\end{gathered}
\ee
By \eqref{div} $d_{\text{sym}}$ coincides with the Euclidean distance of the Bloch vectors on $\mathcal{P}_1(\cH)$. Therefore the action of the $d_{\text{sym}}$ isometry $\Phi$ on $\mathcal{P}_1(\cH)$ is described by an $O \in \mathbf{O}(3)$. By Proposition $VII. 5.7.$ in \cite{Simon} for any orientation preserving $O_+ \in \mathbf{SO}(3)$ there is a $U \in \mathbf{SU}(2)$ such that the action $\rho \mapsto U\rho U^*$ is described by $O_+$ in the Bloch ball model. Similarly, for an orientation reversing $O_- \in \mathbf{O}(3)$ there is an anti-unitary  operator $V$ on $\mathbb{C}^2$ such that the action $\rho \mapsto V \rho V^*$ is described by $O_-$ in the Bloch ball model. We have seen that unitary and anti-unitary conjugations are $d_{\text{sym}}$ isometries. This together with the assumption $\Phi(\mathcal{P}_1(\cH)) \subseteq \mathcal{P}_1(\cH)$ means that we can assume without the loss of generality that $\Phi(\rho)=\rho$ for all $\rho \in \mathcal{P}_1(\cH) $.
Consider the particular pure states $\rho_j=\frac{1}{2}(I+\sigma_j)$ for $j \in \{1,\, 2,\, 3\}$. Then we have
$D_{\text{sym}}^2(\rho_j, \omega)=6-2(\mathbf{b}_{\omega})_j$, and 
$$
D_{\text{sym}}^2(\Phi(\rho_j), \Phi(\omega))=D_{\text{sym}}^2(\rho_j, \Phi(\omega))=6-2(\mathbf{b}_{\Phi(\omega)})_j.
$$
From this, we get that
$$d_{\text{sym}}^2(\rho_j,\omega)=6-2(\mathbf{b}_{\omega})_j-2-\frac{1}{2}D_{\text{sym}}^2(\omega, \omega)$$ and
$$
d_{\text{sym}}^2(\rho_j, \Phi(\omega))=6-2(\mathbf{b}_{\Phi(\omega)})_j-2-\frac{1}{2}D_{\text{sym}}^2(\Phi(\omega),\Phi(\omega)).
$$
The preserver equation $d_{\text{sym}}^2(\rho_j,\omega)=d_{\text{sym}}^2(\rho_j, \Phi(\omega))$ yields to
\be \label{eq:pres-eq-cons}
\frac{1}{2}(D_{\text{sym}}^2(\Phi(\omega),\Phi(\omega))-D_{\text{sym}}^2(\omega, \omega))=2 \ler{(\mathbf{b}_{\omega})_j-(\mathbf{b}_{\Phi(\omega)})_j}.
\ee
Now we shall take into account that the self-distance $D_{\text{sym}}^2(\rho, \rho)$ is explicitly computable, and it is given by
\be \label{eq:sym-self-dist-form}
D_{\text{sym}}^2(\rho, \rho)=4\left(1-\sqrt{1-|\mathbf{b}_{\rho}|^2}\right) \,\, \forall \rho \in \mathcal{S}(\C^2).
\ee
The computation proving \eqref{eq:sym-self-dist-form} can be found in the proof of Proposition 2. in \cite{BPTV-PRA}, however, we note that a multiplicative factor of $2$ has been lost there during the proof. Therefore, the correct formula is \eqref{eq:sym-self-dist-form}, which is consistent with \eqref{sindep}. Accordingly, we rewrite \eqref{eq:pres-eq-cons} the following way:
$$
2\left(1-\sqrt{1-|\mathbf{b}_{\Phi(\omega)}|^2} \right)
-2\left(1-\sqrt{1-|\mathbf{b}_{\omega}|^2} \right)=2 \ler{(\mathbf{b}_{\omega})_j-(\mathbf{b}_{\Phi(\omega)})_j} \,\, \forall j \in \{1,2,3 \}.
$$
Let us define 
$$
\alpha:=\left(1-\sqrt{1-|\mathbf{b}_{\Phi(\omega)}|^2} \right)-\left(1-\sqrt{1-|\mathbf{b}_{\omega}|^2} \right)
=\sqrt{1-|\mathbf{b}_{\omega}|^2}-\sqrt{1-|\mathbf{b}_{\Phi(\omega)}|^2}.
$$
Then we have the following vector equation
$$
\mathbf{b}_{\omega}=\mathbf{b}_{\Phi(\omega)}+\alpha (1,1,1).
$$
Consider another orthonormal basis in $\mathbb{C}^2$ with corresponding unitary $U$, and let $\tilde{\sigma}_j=U\sigma_jU^*$. It is easy to verify that $\tilde{\sigma}_j$ is self-adjoint, unitary and of order two, similarly to $\sigma_j$ for each $j=1, 2, 3$.
We choose this new basis in a way that the corresponding $O \in \mathbf{O}(3)$ in the Bloch vector model does not map the vector $(1,1,1)$ to itself. Next, we show that the cost operator generated by the observables $\{\tilde{\sigma}_j\}_{j=1}^{3}$ coincides with the cost generated by the observables $\{\sigma_j\}_{j=1}^3$. Indeed, 

$$
\tilde{C}_{\text{sym}}=\sum_{j=1}^3 (\tilde{\sigma}_j \otimes I^T-I \otimes \tilde{\sigma}_j^T)^2=\sum_{j=1}^3 (\tilde{\sigma}_j)^2 \otimes I^T+I \otimes (\tilde{\sigma}_j^T)^2-2 \tilde{\sigma}_j \otimes (\tilde{\sigma}_j)^T
$$
\be
\begin{gathered}
=6 I \otimes I^T-2 \sum_{j=1}^3 U \sigma_j U^* \otimes (U \sigma_j U^*)^T
\\
=6 I \otimes I^T-\left( U \otimes (U^*)^T\right) \sum_{j=1}^3 \sigma_j \otimes \sigma_j^T \left( U^* \otimes U^T \right) \\
=\left( U \otimes (U^*)^T\right)C_{\text{sym}} \left( U^* \otimes U^T \right).
\end{gathered}
\ee
According to Lemma 1 in \cite{GPTV23}, $C_{\text{sym}}$ is invariant under unitary conjugations, that is, $\left( U \otimes (U^*)^T\right)C_{\text{sym}} \left( U^* \otimes U^T \right)=C_{\text{sym}}.$
We consider the distinguished pure states $\Tilde{\rho}_j := \frac{1}{2}\left(I+\Tilde{\sigma}_j \right).$
Then $D_C^2(\Tilde{\rho}_j,\omega)=6-2 \langle \Tilde{\mathbf{b}}_{\omega}, e_j \rangle=\Tilde{\mathbf{b}}_{{\omega}_j}$, where $\Tilde{\mathbf{b}}_{\omega}$ denotes the Bloch vector of $\omega$ in the new basis, that is, $\Tilde{\mathbf{b}}_{\omega}=\ler{\tr_{\cH}\lesq{\tilde{\sigma}_j \omega}}_{j=1}^3.$ The latter equality holds because the symmetric cost operator is invariant under unitary conjugations, see Lemma 1 in \cite{GPTV23}.
Thus the preserver equation $d_{\text{sym}}^2(\Tilde{\rho}_j, \omega)=d_{\text{sym}}^2(\Tilde{\rho}_j, \Phi(\omega))$ yields to
$$\Tilde{\mathbf{b}}_{\omega}=\Tilde{\mathbf{b}}_{\Phi(\omega)}+\alpha (1,1,1).$$
Let $  (1,1,1) \neq \mathbf{v} \in \mathbb{R}^3
$ be the image of $(1,1,1)$ under $O$. Then applying $O$ to  both sides of the equation
$$\mathbf{b}_{\omega}=\mathbf{b}_{\Phi(\omega)}+\alpha (1,1,1)$$
yields to 
$$
0=\alpha \cdot \left((1,1,1)-\mathbf{v} \right).
$$
Since $\mathbf{v} \neq (1,1,1),$ we conclude that $\alpha=0$. This means that $\Phi(\rho)=\rho$ for all $\rho \in \cS(\cH),$ which concludes the proof.
\end{proof}

\section{Isometries of the $D_z$ quantum Wasserstein distance}
In this section, we consider the cost operator $C_z$ generated by the single observable $\sigma_z:=\sigma_3=\begin{bmatrix}
    1 & 0 \\ 0 & -1
\end{bmatrix},$ that is,
\be \label{eq:Cz-def}
C_z=(\sigma_z \otimes I^T-I \otimes \sigma_z^T)^2=\begin{bmatrix} 0 & 0 & 0 & 0 \\ 0 & 4 & 0 & 0 \\ 0 & 0 & 4 & 0 \\ 0 & 0 & 0 & 0 \end{bmatrix}.
\ee
Then the corresponding Wasserstein distance is given by 
$$D_z^2(\rho, \omega)=\inf\{\tr_{\cH \otimes \cH^*}[C_z \Pi]\,| \, \Pi \in C(\rho, \omega)\}.$$
The main result of this section is the complete classification of the isometries of the qubit state space which reads as follows.

\begin{theorem} \label{thm:Dz-main}
Let $D_z$ denote the quantum Wasserstein distance defined by the cost operator $C_z$ introduced in \eqref{eq:Cz-def}, and let $\Phi: \cS(\C^2) \rightarrow\cS(\C^2)$ be a map. Then the following are equivalent.
\begin{enumerate}
    \item \label{it:isom} The map $\Phi$ is a quantum Wasserstein isometry with respect to $D_z,$ that is, $D_z(\Phi(\rho),\Phi(\omega))=D_z(\rho,\omega)$ for all $\rho, \omega \in \cS(\C^2).$
    \item \label{it:bloch-char} The map $\Phi$ leaves the Euclidean length of the Bloch vector of a state invariant, that is, $\abs{\bb_{\Phi(\rho)}}=\abs{\bb_{\rho}}$ for all $\rho \in \cS(\C^2),$ and either $\ler{\bb_{\Phi(\rho)}}_3=\ler{\bb_{\rho}}_3$ for all $\rho \in \cS(\C^2),$ or $\ler{\bb_{\Phi(\rho)}}_3=-\ler{\bb_{\rho}}_3$ for all $\rho \in \cS(\C^2).$
\end{enumerate}

\end{theorem}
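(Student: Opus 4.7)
My plan combines a symmetry analysis of the cost operator with distances to a small collection of distinguished reference states.

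For the easy direction \eqref{it:bloch-char} $\Rightarrow$ \eqref{it:isom}, I would establish two invariances of $D_z^2(\rho,\omega)$. First, $C_z$ is preserved by conjugation by $U_1 \otimes U_2$ whenever $U_1$ commutes with $\sigma_z$ and $U_2$ commutes with $\sigma_z^T$; this lets one independently rotate $\rho$ and $\omega$ about the $z$-axis without changing the optimal transport cost. Second, $C_z$ is preserved by the joint conjugation $\sigma_x \otimes \sigma_x^T$, which negates the inner factor $\sigma_z \otimes I - I \otimes \sigma_z^T$ and hence leaves its square fixed; combined with the first invariance this gives invariance under simultaneous negation of $z_\rho$ and $z_\omega$. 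Therefore $D_z^2(\rho,\omega)$ depends only on $|\bb_\rho|, |\bb_\omega|, z_\rho, z_\omega$ and is unchanged by $(z_\rho, z_\omega) \mapsto (-z_\rho, -z_\omega)$, which matches exactly the invariance imposed by \eqref{it:bloch-char}.

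For the hard direction \eqref{it:isom} $\Rightarrow$ \eqref{it:bloch-char}, I would probe $\Phi$ via two key computations. Since any coupling with a pure marginal is forced to be a product, one gets immediately $D_z^2(\rho, |0\rangle\langle 0|) = 2 - 2z_\rho$ and $D_z^2(\rho, |1\rangle\langle 1|) = 2 + 2z_\rho$. Next I would derive the self-distance formula $D_z^2(\rho,\rho) = 2(|\bb_\rho|^2 - z_\rho^2)/(1+\sqrt{1-|\bb_\rho|^2})$: the rank-one purification coupling $|\psi_\rho\rangle\langle\psi_\rho|$ with $|\psi_\rho\rangle = \sum_i\sqrt{\lambda_i}|e_i\rangle|e_i^*\rangle$ (using the spectral decomposition of $\rho$) supplies the upper bound by a direct computation, while the matching lower bound should follow from SDP duality or from explicit elimination on the $4 \times 4$ coupling constraints. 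In particular, this self-distance vanishes precisely when $\bb_\rho$ lies on the $z$-axis.

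The conclusion is then a short chain. Isometry forces $\Phi(|0\rangle\langle 0|)$ and $\Phi(|1\rangle\langle 1|)$ to be diagonal (zero self-distance, possible only on the $z$-axis), and diagonal-to-diagonal transport reduces to the classical Wasserstein cost $2|z-z'|$, so $D_z^2(|0\rangle\langle 0|, |1\rangle\langle 1|) = 4$ forces $\Phi$ to permute these two pure states. Comparing $D_z^2(\Phi(\rho), \Phi(|0\rangle\langle 0|))$ with the preserved value $2-2z_\rho$ then gives $z_{\Phi(\rho)} = \pm z_\rho$ with a globally constant sign, and feeding $z_{\Phi(\rho)}^2 = z_\rho^2$ back into the self-distance identity reduces matters to an equation in $|\bb_{\Phi(\rho)}|$ and $|\bb_\rho|$; the function $x \mapsto (x-c)/(1+\sqrt{1-x})$ is strictly increasing on $[c,1]$, whence $|\bb_{\Phi(\rho)}| = |\bb_\rho|$. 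The main obstacle is the self-distance formula: the purification upper bound is immediate, but the matching lower bound calls for a genuine SDP analysis of the coupling constraints; once this is in hand, the remainder is a short combination of reference-state identities and a monotonicity check.
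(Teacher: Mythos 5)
Your strategy is essentially the one the paper follows. For \eqref{it:bloch-char} $\Rightarrow$ \eqref{it:isom} you exploit that conjugating a coupling by $U_1\otimes U_2^T$, with $U_1,U_2$ commuting with $\sigma_z$, is a cost-preserving bijection between coupling sets; this is exactly the paper's operator $\cU=U_z(t(\omega))\otimes\ler{U_z(t(\rho))^*}^T$ commuting with $C_z$, and your emphasis on being able to rotate $\rho$ and $\omega$ \emph{independently} is precisely the crucial point (the rotation angle must be allowed to depend on the state). The $\sigma_x\otimes\sigma_x^T$ conjugation for the sign-flip case is also the paper's device. For \eqref{it:isom} $\Rightarrow$ \eqref{it:bloch-char} you use product-coupling distances to the poles to pin down $\ler{\bb_\rho}_3$ and the self-distance formula plus monotonicity to pin down $\abs{\bb_\rho}$, which is the paper's Steps 2--3. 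The only real routing difference is how the images of the poles are located: you use vanishing self-distance plus the classical reduction for diagonal states, whereas the paper uses a diameter argument ($D_z^2(\rho,\omega)\le 2-2\ler{\bb_\rho}_3\ler{\bb_\omega}_3\le 4$, with equality only for the two poles), which has the advantage of requiring nothing beyond the trivial product-coupling upper bound.

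The one gap you flag --- the matching lower bound in the self-distance formula --- is genuine in your write-up but does not require a fresh SDP analysis: it is exactly Corollary 1 of De Palma--Trevisan \cite{DPT-AHP}, which states that the self-distance is attained by the canonical purification coupling $||\sqrt{\rho}\rangle\rangle\langle\langle\sqrt{\rho}||$, and the paper closes this step by citation. Once that is invoked, your formula $D_z^2(\rho,\rho)=2\ler{\abs{\bb_\rho}^2-\ler{\bb_\rho}_3^2}/\ler{1+\sqrt{1-\abs{\bb_\rho}^2}}$ follows; note that your normalization is the one consistent with a direct check on a pure equatorial state (unique coupling $\rho\otimes\rho^T$, cost $2$). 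Your final monotonicity step is sound, but you should treat the degenerate case $\abs{\bb_\rho}=\abs{\ler{\bb_\rho}_3}$ separately: there the self-distance vanishes for every on-axis state and determines nothing, yet $\Phi(\rho)$ is then forced to be on-axis as well, so its Bloch length is recovered from the already-established third coordinate. With the citation supplied, your proposal is a correct proof along the paper's lines.
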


\begin{proof}
Let us consider the direction \eqref{it:isom} $\Longrightarrow$ \eqref{it:bloch-char} first.
\par
{\bf Step 1}: the $D_z$-diameter of $\cS(\C^2)$ is $2$ and it is realized if and only if 
$$
\lers{\rho,\omega}=\lers{\fel(I+\sigma_z),\fel(I-\sigma_z)}.
$$
Indeed, let $\rho$ and $\om$ be arbitrary states, then the cost of the trivial coupling $\om \otimes \rho^T$ is an upper bound of $D_z^2(\rho, \om)$, that is,
\begin{equation}
D_z^2(\rho, \om) \leq \tr_{\cH \otimes \cH^*}\left[\om \otimes \rho^T \cdot (2 I \otimes I^T-2 \sigma_z \otimes \sigma_z^T)\right]=2-2 (b_{\rho})_3 \cdot (b_{\om})_3.
\end{equation}
It follows that $\text{diam}(S(\cH), D_z^2) \leq 4$ and $D_z^2(\rho, \omega)=\text{diam}(S(\cH), D_z^2)$ if and only if $\rho=\frac{1}{2}(I + \sigma_z)$ and $\om=\frac{1}{2}(I-\sigma_z),$ or the other way around, since if $|(\mathbf{b}_{\rho})_3|<1$ or $|(\mathbf{b}_{\om})_3|<1$, then $2-2 (b_{\rho})_3 \cdot (b_{\om})_3<4$.
Therefore, $\Phi\ler{\fel(I \pm \sigma_z)}=\fel(I\pm \sigma_z)$ or $\Phi\ler{\fel(I \pm \sigma_z)}=\fel(I\mp \sigma_z),$ that is, $\Phi$ either leaves both $\fel(I+ \sigma_z)$ and $\fel(I-\sigma_z)$ fixed, or it switches them.
Consider the map $T:S(\cH) \to S(\cH)$ where $T(\rho)=\sigma_x \rho \sigma_x^*$ for all $\rho \in \cS(\C^2)$. 
Then $T$ is an isometry of $\cS(\C^2)$ with respect to the quantum Wasserstein distance $D_z$ such that $T\left(\frac{1}{2}(I + \sigma_z)\right)=\frac{1}{2}(I - \sigma_z)$ and $T\left(\frac{1}{2}(I - \sigma_z)\right)=\frac{1}{2}(I + \sigma_z)$.
Indeed,
\be
T\left(\frac{1}{2}(I + \sigma_z)\right)=\sigma_x\left(\frac{1}{2}(I+\sigma_z)\right)\sigma_x^*=\frac{1}{2}(I+\sigma_x \sigma_z \sigma_x^*)=\frac{1}{2}(I-\sigma_z).
\ee
Similarly, $T\left(\frac{1}{2}(I - \sigma_z)\right)=\frac{1}{2}(I + \sigma_z)$.
To see that $T$ is isometric, it is sufficient to show that for all $\rho,\, \om$ in $\cS(\C^2)$ and for all couplings $\pi$ in $\cC(\rho, \om)$ there  exists a coupling $\tilde{\pi}$ in $\cC(T(\rho), T(\om))$ such that $\tr_{\hohc}[\tilde{\pi}C_z]=\tr_{\hohc}[\pi C_z],$ and the other way around: for all $\tilde{\pi}$ in $\cC(T(\rho), T(\om))$ there is a coupling $\pi \in \cC(\rho,\om)$ with the same cost. The desired bijection between $\cC(\rho, \om)$ and $\cC(T(\rho),T(\omega))$ reads as follows: for a given $\pi \in \cC(\rho, \omega)$ we define $\tilde{\pi}$ by $\tilde{\pi}=(\sigma_x \otimes \sigma_x^T)\pi (\sigma_x \otimes \sigma_x^T).$ Now
\be
\begin{gathered}
 \tr_{\hohc}[\tilde{\pi} C_z]=\tr_{\hohc}[(\sigma_x \otimes \sigma_x^T)\pi (\sigma_x \otimes \sigma_x^T)(\sigma_z \otimes I^T- I\otimes \sigma_z^T)^2)] \\
 =\tr_{\hohc}[\pi(\sigma_x \otimes \sigma_x^T)(2I-2\sigma_z \otimes \sigma_z^T)(\sigma_x \otimes \sigma_x^T)] \\
 =\tr_{\hohc}[\pi(2I-2 \sigma_x \sigma_z \sigma_x \otimes (\sigma_x \sigma_z \sigma_x)^T)] \\
 =\tr_{\hohc}[\pi(2I-2 (-\sigma_z)\otimes (-\sigma_z)^T)]=\tr_{\hohc}[\pi C_z].
\end{gathered}
\ee
Therefore, we can assume, without loss of generality, that if $\Phi$ is a $D_z-$Wasserstein  isometry, then $\Phi\ler{\frac{1}{2}(I+\sigma_z)}=\frac{1}{2}(I+\sigma_z)$ and $\Phi\ler{\frac{1}{2}(I-\sigma_z)}=\frac{1}{2}(I-\sigma_z)$.

\par 
{\bf Step 2}: Consequently, $\ler{\bb_{\Phi(\rho)}}_3= \ler{\bb_{\rho}}_3$ as 
$$
D_z\ler{\Phi(\rho), \Phi\ler{\fel(I+ \sigma_z)}}=D_z\ler{\Phi(\rho), \fel(I+ \sigma_z)}=D_z\ler{\rho, \fel(I + \sigma_z)}.
$$
From this, we have that
\be
2-2\ler{\bb_{\Phi(\rho)}}_3 \cdot  1=2-2\ler{\bb_{\rho}}_3 \cdot 1.
\ee
It follows that $\ler{\bb_{\Phi(\rho)}}_3 = \ler{\bb_{\rho}}_3.$

\par
{\bf Step 3}:
Next we utilise that $\Phi$ preserves the self-distances, that is, $D_z(\Phi(\rho), \Phi(\rho))=D_z(\rho, \rho)$ for all $\rho$ in $\cS(\cH).$
According to Corollary 1. in \cite{DPT-AHP} the self-distance is realised by the canonical purification of the state:
\be \label{eq:self-dist-form}
D^2_z(\rho, \rho)=\langle \langle \sqrt{\rho}|| C_z ||\sqrt{\rho} \rangle \rangle.
\ee
Here, and in the sequel, we use the following notational convention originally introduced in \cite{DPT-AHP}. For an operator $X \in \cL(\C^2),$  the symbol $||X\rangle\rangle$ stands for the linear map 
$$
\C \ni z \mapsto z e(X) \in \C^2 \otimes (\C^2)^*
$$
where $e: \cL(\C^2) \to \C^2 \otimes (\C^2)^*$ is the canonical linear isometry between $\cL(\C^2)$ and $\C^2 \otimes (\C^2)^*.$ The symbol $\langle\langle X ||$ denotes the linear functional 
$$
\C^2 \otimes (\C^2)^* \ni Y \mapsto \tr_{\C^2}\left[X^* e^{-1}(Y)\right] \in \C.
$$
A spectral decomposition of $C_z$ is 
$$
C_z=
4\ler{\fel\lesq{\ba{cccc} 0&0&0&0 \\ 0&1&1&0 \\ 0&1&1&0 \\ 0&0&0&0 \ea}+\fel\lesq{\ba{cccc} 0&0&0&0 \\ 0&1&-1&0 \\ 0&-1&1&0 \\ 0&0&0&0 \ea} }
$$
$$
=4\ler{\fel||\sigma_1\rangle \rangle \langle \langle \sigma_1||+\fel||\sigma_2 \rangle \rangle \langle \langle \sigma_2||},
$$
and if $\rho \neq \fel I$ then
$$\sqrt{\rho}=\sqrt{\lambda} \fel \ler{I+\frac{\bb_{\rho}}{|\bb_{\rho}|} \cdot \underline{\sigma}}+\sqrt{1-\lambda}\fel\ler{I-\frac{\bb_{\rho}}{|\bb_{\rho}|} \cdot \underline{\sigma}},$$
where $\lambda=\fel(1+|\mathbf{b}_{\rho}|)$ and $\underline{\sigma}$ is the vector formed by the Pauli matrices defined in \eqref{eq:pauli-def}.
Therefore,
\be \label{eq:pur-coupl-cost}
\begin{gathered}
\langle \langle \sqrt{\rho}|| C_z ||\sqrt{\rho} \rangle \rangle= \\
=\ler{\sqrt{\lambda} \fel \ler{\langle \langle \sigma_0||+\sum\limits_{j=1}^3 \frac{(\mathbf{b}_{\rho})_j}{|\mathbf{b}_{\rho}|} \langle \langle \sigma_j||}+\sqrt{1-\lambda} \fel\ler{\langle \langle \sigma_0||-\sum\limits_{j=1}^3 \frac{(\mathbf{b}_{\rho})_j}{|\mathbf{b}_{\rho}|} \langle \langle \sigma_j||}} \cdot \\
\cdot  \ler{2 ||\sigma_1 \rangle \rangle \langle \langle \sigma_1||+2 ||\sigma_2 \rangle \rangle \langle \langle \sigma_2||} \cdot 
\\
\cdot \ler{\sqrt{\lambda} \fel \ler{ ||\sigma_0 \rangle \rangle+\sum\limits_{j=1}^3 \frac{(\mathbf{b}_{\rho})_j}{|\mathbf{b}_{\rho}|} ||\sigma_j \rangle \rangle }+\sqrt{1-\lambda} \fel\ler{ ||\sigma_0 \rangle \rangle -\sum\limits_{j=1}^3 \frac{(\mathbf{b}_{\rho})_j}{|\mathbf{b}_{\rho}|} || \sigma_j \rangle \rangle}}.
\end{gathered}
\ee
Note that $\langle \langle \sigma_i|| \sigma_j \rangle \rangle=\tr_{\cH}[\sigma_i^* \sigma_j]=2 \delta_{ij}$. Therefore, the above formula \eqref{eq:pur-coupl-cost} greatly simplifies, and by \eqref{eq:self-dist-form} we get that
$$
D^2_z(\rho, \rho)=\fel \ler{1-2 \sqrt{\lambda (1-\lambda )}} \frac{(\mathbf{b}_{\rho})_1^2+(\mathbf{b}_{\rho})_2^2}{|\mathbf{b}_{\rho}|^2}
$$
\be 
=\fel \ler{1-\sqrt{1-|\mathbf{b}_{\rho}|^2}} \cdot \ler{1-\frac{(\mathbf{b}_{\rho})^2_3}{|\mathbf{b}_{\rho}|^2}}.
\ee
From the preserver equation $D_z^2(\Phi(\rho), \Phi(\rho))=D_z^2(\rho, \rho)$ we get that
$$
\fel \ler{1-\sqrt{1-|\mathbf{b}_{\Phi(\rho)}|^2}} \cdot \ler{1-\frac{(\mathbf{b}_{\Phi(\rho}))^2_3}{|\mathbf{b}_{\Phi(\rho)}|^2}}
=
\fel \ler{1-\sqrt{1-|\mathbf{b}_{\rho}|^2}} \cdot \ler{1-\frac{(\mathbf{b}_{\rho})^2_3}{|\mathbf{b}_{\rho}|^2}}.
$$
Note that we already deduced in Step 2 that $(\mathbf{b}_{\Phi(\rho)})^2_3=(\mathbf{b}_{\rho})^2_3,$ and the map $t \mapsto \ler{1-\sqrt{1-t}} \cdot \ler{1-\frac{c}{t}}$ is strictly monotone increasing on the domain $t \in [0, 1]$ for any fixed $c \in [0,1].$ Therefore, $|\mathbf{b}_{\Phi(\rho)}|=|\mathbf{b}_{\rho}|$ which completes the proof of the direction \eqref{it:isom} $\Longrightarrow$ \eqref{it:bloch-char}.
\par
Now we consider the direction \eqref{it:bloch-char} $\Longrightarrow$ \eqref{it:isom}.
\par
Consider the case $\ler{\bb_{\Phi(\rho)}}_3 = \ler{\bb_{\rho}}_3$ for all $\rho \in \cS(\C^2).$ This equation and the invariance of the length of the Bloch vector $\abs{\bb_{\Phi(\rho)}}=\abs{\bb_{\rho}}$ implies that for each $\rho$ there is some $t(\rho) \in \R$ (which depends on $\rho$ !) such that
$$
\Phi(\rho)=U_z(t(\rho)) \rho U_z(t(\rho))^*
$$
where $U_z(t)=e^{it\sigma_z}.$
\par
Now the couplings of $\Phi(\rho)$ and $\Phi(\omega)$ take the following form:
\be\label{couplings}
\cC\ler{\Phi(\rho),\Phi(\omega)}=\lers{\cU \Pi \cU^* \, \middle| \, \Pi \in \cC(\rho, \omega)}
\ee
where $\cU=U_z(t(\omega))\otimes \ler{U_z(t(\rho))^*}^T.$
Indeed, let $\Pi=\sum\limits_{k=1}^K A_k \otimes B_k^T$ be a decomposition of a coupling $\Pi \in  \cC(\rho, \omega)).$
Then 
\be
\begin{split}
\cU \Pi \cU^*&=\sum\limits_{k=1}^K U_z(t (\om)) \otimes \ler{U_z(t(\rho))^*}^T \ler{A_k \otimes B_k^T} U_z(t (\om))^* \otimes \ler{U_z(t(\rho))}^T \\
&=\sum\limits_{k=1}^K U_z(t (\om)) A_k U_z(t (\om))^* \otimes \ler{U_z(t(\rho))B_k U_z(t(\rho))^*}^T.
\end{split}
\ee
We have that 
\be \label{eq:q-marg-comp}
\begin{split}
\tr_{\cH^*}[\cU \Pi \cU^*]&=\sum\limits_{k=1}^K  U_z(t (\om)) A_k  U_z(t (\om))^* \cdot \tr_{\cH} [B_k] \\
&=U_z(t (\om))  \ler{\sum\limits_{k=1}^K \tr_{\cH} [B_k] \cdot A_k } U_z(t (\om))^* \\
&=U_z(t (\om)) \tr_{\cH^*}[\Pi]U_z(t (\om))^*=U_z(t (\om)) \om U_z(t (\om))^* \\
&= \Phi(\om).
\end{split}
\ee
Similarly, $\tr_{\cH}[\cU \Pi \cU^*]=\Phi(\rho)^T$. This shows that $\lers{\cU \Pi \cU^* \, \middle| \, \Pi \in \cC(\rho, \omega)} \subseteq \cC\ler{\Phi(\rho),\Phi(\omega)}$. For the reversed inclusion, a computation very similar to \eqref{eq:q-marg-comp} shows that if $\tilde{\Pi}$ is a coupling of $\Phi(\rho)$ and $\Phi(\omega)$, then $\cU^* \tilde{\Pi} \cU$ belongs to $\cC(\rho, \om)$.
It is a crucial observation that $\cU$ commutes with $C_z$ for all $t(\rho), t(\om) \in \mathbb{R}$. Indeed, $\cU=e^{it(\om)\sigma_z} \otimes \ler{e^{it(\rho)\sigma_z}}^T$ and $C_z=(\sigma_z \otimes I^T -I \otimes \sigma_z^T)^2.$ All terms can be obtained from the continuous function calculus of $\sigma_z$, therefore $[\cU, C_z ]=0$.
Finally, by \eqref{couplings} we get that
\be
\begin{split}
D^2_z(\Phi(\rho), \Phi(\om))&=\inf\{\tr_{\cH \otimes \cH^*}[\tilde{\Pi}C_z] \,| \, \tilde{\Pi} \in \cC(\Phi(\rho), \Phi(\om))\}= \\
&=\inf\{\tr_{\cH \otimes \cH^*}[\cU \Pi \cU^* C_z]\,| \, \Pi \in \cC(\rho, \om)\}= \\
&= \inf\{\tr_{\cH \otimes \cH^*}[\Pi \cU^*\cU  C_z]\,| \, \Pi \in \cC(\rho, \om)\}= \\
&=D^2_z(\rho, \om).
\end{split}
\ee
which completes the proof of the direction \eqref{it:bloch-char} $\Longrightarrow$ \eqref{it:isom}.
\end{proof}

\paragraph*{{\bf Final remarks.}}
Given the existing results on the isometries of the quantum bit state space with respect to the quantum Wasserstein distances $D_{sym}$ and $D_{xz}$ (see \cite{GPTV23}), and also our present results on the preservers of the divergence $d_{sym}$ and the distance $D_z,$ it is highly natural to ask what can be said about the structure of the qubit state space isometries with respect to the quantum Wasserstein divergences $d_{xz}$ and $d_z.$ We believe that these questions are highly nontrivial --- especially as quantum Wasserstein \emph{divergences} are typically non-convex (in contract to quantum Wasserstein \emph{distances}), and hence it is often the case that their extremal values (maxima) are not realaized by extremal (that is: pure) states ---, and we consider them as attractive open problems.
\par
\paragraph*{{\bf Acknowledgment.}} We are grateful to the anonymous reviewer for his/her valuable comments and suggestions, and insightful questions.


\bibliographystyle{plainurl}  
\bibliography{qw-isom-qubit}

\end{document}